\theoremstyle{definition}
\newtheorem{theorem}{\normalfont\bfseries Theorem}
\newtheorem{proposition}{\normalfont\bfseries Proposition}
\newtheorem{definition}{\normalfont\bfseries Definition}
\newtheorem{corollary}{\normalfont\bfseries Corollary}
\newtheorem*{observation}{\normalfont \bfseries Observation}
\newcommand{\R}{\mathbb{R}}
\newcommand{\X}{\mathbb{R}^n}
\newcommand{\U}{\mathbb{R}^m}
\renewcommand{\S}{\mathcal{S}}
\newcommand{\bS}{\partial \S}
\newcommand{\Se}{\mathcal{S}_{\rm e}}
\newcommand{\bSe}{\partial \Se}
\newcommand{\K}{\mathcal{K}}
\newcommand{\Kinf}{\mathcal{K}_\infty}
\newcommand{\Keinf}{\mathcal{K}_\infty^{\rm e}}
\newcommand{\grad}[1]{\nabla #1}
\renewcommand{\L}[2]{L_{#1} #2}
\newcommand{\kd}{k_{\rm d}}
\newcommand{\ks}{k_{\rm s}}
\newcommand{\he}{h_{\rm e}}
\newcommand{\alphae}{\alpha_{\rm e}}
\newcommand{\vL}{v_{\rm L}}
\newcommand{\aL}{a_{\rm L}}
\newcommand{\vmax}{v_{\rm max}}
\newcommand{\Dst}{D_{\rm st}}
\newcommand{\Dsf}{D_{\rm sf}}
\renewcommand{\TH}{T_{\rm h}}
\newcommand{\TTC}{T_{\rm c}}
\title{\LARGE \bf
On the Safety of Connected Cruise Control: \\
Analysis and Synthesis with Control Barrier Functions
}
\author{Tamas G. Molnar, G\'abor Orosz and Aaron D. Ames%
\thanks{*This research is supported in part by the National Science Foundation (CPS Award \#1932091), Dow (\#227027AT) and Aerovironment.}%
\thanks{T. G. Molnar and A. D. Ames are with the Department of Mechanical and Civil Engineering, California Institute of Technology, Pasadena, CA 91125, USA,
{\tt\small tmolnar@caltech.edu, ames@caltech.edu}.}%
\thanks{G. Orosz is with the Department of Mechanical Engineering and the Department of Civil and Environmental Engineering, University of Michigan, Ann Arbor, MI 48109, USA,
{\tt\small orosz@umich.edu}.}%
}
\begin{document}

\maketitle
\thispagestyle{empty}
\pagestyle{empty}

\begin{abstract}
Connected automated vehicles have shown great potential to improve the efficiency of transportation systems in terms of passenger comfort, fuel economy, stability of driving behavior and mitigation of traffic congestions.
Yet, to deploy these vehicles and leverage their benefits, the underlying algorithms must ensure their safe operation.
In this paper, we address the safety of connected cruise control strategies for longitudinal car following using \emph{control barrier function (CBF)} theory.
In particular, we consider various safety measures such as minimum distance, time headway and time to conflict, and provide a formal analysis of these measures through the lens of CBFs. 
Additionally, motivated by how stability charts facilitate stable controller design, we derive \emph{safety charts} for existing connected cruise controllers to identify safe choices of controller parameters. 
Finally, we combine the analysis of safety measures and the corresponding stability charts to synthesize safety-critical connected cruise controllers using CBFs.
We verify our theoretical results by numerical simulations.
\end{abstract}

\section{INTRODUCTION}
\label{sec:intro}


Vehicle automation holds the promise of improving the efficiency of traffic systems, with great prospective benefits in safety, passenger comfort, fuel economy, mitigation of traffic congestions and reduction of travel times.
The success of automated vehicles (AVs), however, is conditioned on designing efficient longitudinal and lateral controllers.
Therefore, strategies like {\em adaptive cruise control (ACC)} have been studied extensively with great success.
%
The performance of AVs is further improved by providing additional information about the surrounding traffic via vehicle-to-everything (V2X) connectivity---this leads to connected automated vehicles (CAVs) with better ability to respond to other road participants.
For example, {\em cooperative adaptive cruise control (CACC)}~\cite{wang2018review_CACC}
allows platoons of CAVs to share information, cooperate, and improve their driving behavior.
{\em Connected cruise control (CCC)}~\cite{Ge2014}, on the other hand, regulates the motion of a single CAV while leveraging information shared by other connected (but not necessarily automated) vehicles.
With well-designed ACC, CACC or CCC systems, CAVs have shown significant benefits in energy saving~\cite{vahidi2018energy} and in mitigating traffic congestions~\cite{cui2017stabilizing, zheng2020smoothing}.
Remarkably, these benefits have also been showcased by experiments~\cite{Ge2018, stern2018dissipation}.


To deploy CAVs and thereby enjoy their benefits, safe and collision-free behavior is of primary concern.
Recently, the literature has put emphasis on safety-critical control designs for CAVs.
These include safe ACC systems established using reachability analysis~\cite{alam2014guaranteeing}, formal methods~\cite{Niletal2016} and control barrier functions (CBFs)~\cite{ames2014control,AmesXuGriTab2017}, and safe CACC with model predictive control~\cite{massera2017safe}, just to mention a few examples.
Now, we focus on CBF-based approaches, due to their success in a variety of application areas, including AV experiments in traffic~\cite{gunter2022experimental},
AVs executing obstacle avoidance~\cite{chen2018obstacle},
multi-agent systems capturing AVs~\cite{jankovic2022multiagent},
traffic merging~\cite{xiao2019decentralized},
roundabout crossing~\cite{abduljabbar2021cbfbased}, and safe traffic control by CAVs~\cite{zhao2023safetycritical}.
Safe CCC with CBFs has also appeared recently~\cite{He2018}, and it has been implemented on a full-size truck and successfully tested in experiments~\cite{Alan2022AV}.


\begin{figure}
\centering
\includegraphics[scale=1]{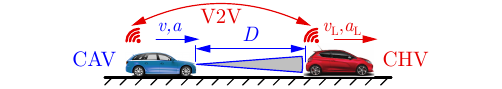}
\caption{
Connected cruise control (CCC) setup: a connected automated vehicle (CAV) is controlled to safely follow a connected human-driven vehicle (CHV) by using information from vehicle-to-vehicle (V2V) communication.
}
\label{fig:carfollowing}
\vspace{-5mm}
\end{figure}

In this paper, we establish safe CCC designs that allow CAVs to follow other vehicles with guaranteed safety w.r.t.~various metrics like minimum distance, time headway or time to conflict.
We make two contributions through the application of CBF theory.
First, we analyze the safety of an existing CCC strategy, and determine provably safe choices of controller parameters.
These results are summarized as safety charts---a concept adopted from~\cite{He2018}.
Second, we synthesize safety-critical controllers by minimally modifying existing, potentially unsafe designs.
We use numerical simulations to verify the theoretical analysis.
Throughout the paper we highlight the benefits of connectivity in order to maintain safety in mixed traffic scenarios.
The results presented are also extendable to other mobile agent systems where spatiotemporal separation between the agents is required, e.g., legged robots, airborne agents and sea vessels.


\section{CONNECTED CRUISE CONTROL}
\label{sec:CCC}

Consider the scenario in Fig.~\ref{fig:carfollowing}, where a connected automated vehicle (CAV) is controlled longitudinally to follow a connected human-driven vehicle (CHV) on a single lane straight road while maintaining a safe distance.
We assume that the CAV has access to measurements of its own speed $v$ and acceleration $a$, the leading CHV's speed $\vL$ and acceleration $\aL$, and the distance $D$, by the help of on-board range sensors and vehicle-to-vehicle (V2V) connectivity.
Note that $\aL$ is typically difficult to obtain by range sensors, while V2V communication can help provide it.

We describe car following by the state ${x = \begin{bmatrix} D &\!\!\! v &\!\!\! \vL \end{bmatrix}^\top}$ and the model:
\begin{align}
\begin{split}
    \dot{D} & = \vL - v, \\
    \dot{v} & = u - p(v), \\
    \dot{v}_{\rm L} & = \aL,
\end{split}
\label{eq:system_nolag}
\end{align}
where the CAV executes the acceleration command $u$ subject to rolling and air resistance captured by ${p(v) \geq 0}$.

The car-following task can be accomplished by the following {\em connected cruise control (CCC)} strategy, ${u = \kd(x)}$, that was proposed in~\cite{Ge2014} and experimentally tested in~\cite{Ge2018}: 
\begin{equation}
    \kd(x) = A \big( V(D) - v \big) + B \big( W(\vL) - v \big) + C \aL.
\label{eq:CCC}
\end{equation}
The CAV responds to the distance, speed difference and CHV's acceleration with gains ${A, B, C \!\geq\! 0}$, respectively, and:
\begin{align}
\begin{split}
    V(D) & = \min\{\kappa(D-\Dst),\vmax\}, \\
    W(\vL) & = \min\{\vL,\vmax\}.
\end{split}
\label{eq:VW}
\end{align}
Here the speed policy $W$ prevents the CAV from following a CHV that exceeds the speed limit $\vmax$, while the range policy $V$ prescribes a desired speed as a function of the distance $D$, that is zero at the standstill distance $\Dst$ and increases linearly with gradient ${\kappa>0}$ up to the speed limit.

Fig.~\ref{fig:simulation} shows the performance of CCC by numerical simulations of~(\ref{eq:system_nolag})-(\ref{eq:CCC}) for two different sets of control gains (solid lines and dash-dot lines).
Unless stated otherwise, the parameters of each numerical result in this paper are those listed in Table~\ref{tab:parameters}.
The simulations\footnote{Matlab codes for each example are available at: \url{https://github.com/molnartamasg/safe-connected-cruise-control}.} capture an emergency braking where the CHV comes to a stop.
Using CCC, the CAV responds to this event and decelerates; see panels (b) and (d).
Notice in panel (a) that the distance between the vehicles greatly depends on the choice of control gains.
The selection of controller parameters has significant impact on safety, that is highlighted in panel (c) and is discussed below.

\begin{figure}
\centering
\includegraphics[scale=1]{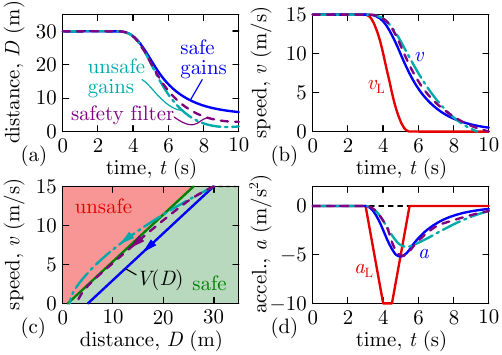}
\caption{
Simulations of model~(\ref{eq:system_nolag}) with CCC~(\ref{eq:CCC}), using a provably safe choice of controller parameters (solid lines) and an unsafe choice (dash-dot lines).
These sets of parameters correspond to points P and Q in Fig.~\ref{fig:safety}(a), respectively.
For the unsafe CCC setup, formal safety guarantees can be recovered by utilizing the safety filter~(\ref{eq:ksTH})-(\ref{eq:QPsolumin}) (dashed lines).
}
\label{fig:simulation}
\end{figure}

\begin{figure}
\centering
\includegraphics[scale=1]{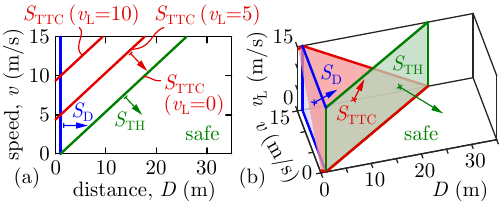}
\caption{
Safe sets of longitudinal car-following, considering safety with respect to distance (D), time headway (TH) and time to conflict (TTC).
Arrows indicate the safe side of the set boundaries.
}
\label{fig:sets}
\vspace{-3mm}
\end{figure}

\bgroup
\setlength{\tabcolsep}{3pt}
\begin{table}
\caption{Parameter Values Used for the Numerical Examples}
\begin{center}
\begin{tabular}{c|c|c|c}
Parameter & Symbol & Value & Unit \\
\hline
resistance terms & $p(v)$ & 0 & m/s$^2$ \\
\hline
standstill distance & $\Dst$ & 5 & m \\
range policy gradient & $\kappa$ & 0.6 & 1/s \\
speed limit & $\vmax$ & 15 & m/s \\
safe control gains (point P) & $(A,B,C)$ & (0.4,0.6,0) & (1/s,1/s,1) \\
unsafe control gains (point Q) & $(A,B,C)$ & (0.4,0.3,0) & (1/s,1/s,1) \\
\hline
safe distance & $\Dsf$ & 1 & m \\
safe time headway & $\TH$ & 1.67 & s \\
safe time to conflict & $\TTC$ & 1.67 & s \\
maximum speed & ${\vmax = \bar{v}}$ & 15 & m/s \\
\end{tabular}
\end{center}
\label{tab:parameters}
\vspace{-3mm}
\end{table}
\egroup

To characterize the safety of longitudinal control, we rely on various safety measures that require different kinds of spatiotemporal separations between the vehicles.
We list three possible safety criteria below, in the form ${h(x) \geq 0}$, associated with a safety measure $h$ and a {\em safe set} $\S$ of states.
\begin{enumerate}[label=(\roman*)]
\item
{\em Distance} must be kept above a safe value $\Dsf$:
\begin{align}
\begin{split}
    \S_{\rm D} & = \{x \in \R^3: D \geq \Dsf\}, \\
    h_{\rm D}(x) & = D-\Dsf.
\end{split}
\label{eq:safesetD}
\end{align}
\item
{\em Time headway} must be kept above a safe value $\TH$:
\begin{align}
\begin{split}
    \S_{\rm TH} & = \{x \in \R^3: D \geq \Dsf + \TH v\}, \\
    h_{\rm TH}(x) & = (D-\Dsf)/\TH - v.
\end{split}
\label{eq:safesetTH}
\end{align}
\item
{\em Time to conflict} must be kept above a safe value $\TTC$:
\begin{align}
\begin{split}
    \S_{\rm TTC} & = \{x \in \R^3: D \geq \Dsf + \TTC (v-\vL)\}, \\
    h_{\rm TTC}(x) & = (D-\Dsf)/\TTC + \vL-v.
\end{split}
\label{eq:safesetTTC}
\end{align}
\end{enumerate}
The time to conflict is often referred to as time to collision if ${\Dsf=0}$.
For further choices of safety indicators, please see \cite{zhao2023safetycritical, He2018} and the references therein.

The safe sets in~(\ref{eq:safesetD})-(\ref{eq:safesetTTC}) are depicted in Fig.~\ref{fig:sets}, where their boundaries are shown by thick lines in panel (a) and as planes in panel (b).
The system is safe w.r.t.~the given safety measure if it evolves in the half space indicated by arrows.
Note that the time headway is the strictest of the three safety indicators: it requires the largest distance at a given speed to be safe.
The safety of the previous simulation results w.r.t.~the time headway is evaluated in Fig.~\ref{fig:simulation}(c).
While CCC~(\ref{eq:CCC}) keeps system~(\ref{eq:system_nolag}) safe with one choice of CCC parameters (solid lines), it fails to do so with another (dash-dot lines).
As such, choosing the parameters of the CAV's controller has crucial impact on safety and safe parameters must be identified.

The problem of choosing controller parameters has been well-studied for CCC from stability perspective.
Specifically,~\cite{Ge2014} has derived {\em stability charts} for~(\ref{eq:system_nolag})-(\ref{eq:CCC}) that identify controller parameters associated with stable driving.
These charts distinguish the {\em plant stable} domain, where the CAV is able to approach a constant speed in a stable way, and the {\em string stable} region, where the CAV has smaller speed fluctuations than the CHV, thereby smoothing the traffic flow.
Examples of these stability charts from~\cite{Ge2014} are plotted in Fig.~\ref{fig:stability}, where the plant stable domain, given by ${A \geq 0}$ and ${A \geq -B}$, is red, and the string stable region, given by ${A \geq 0}$, ${A \geq 2 \big( (1-C)\kappa -B \big)}$ and ${C \leq 1}$, is blue.
The charts are shown for ${C=0}$ in panel (a), and ${C = 0, 0.25, 0.75}$ (thin lines), ${C=0.5}$ (thick line and shading) and ${C \to 1}$ (dashed line) in panel (b).
In what follows, we establish {\em safety charts}---whose concept first appeared in~\cite{He2018}---in a similar way to identify safe controller parameters.
As preliminary, we revisit the theory of control barrier functions.

\begin{figure}
\centering
\includegraphics[scale=1]{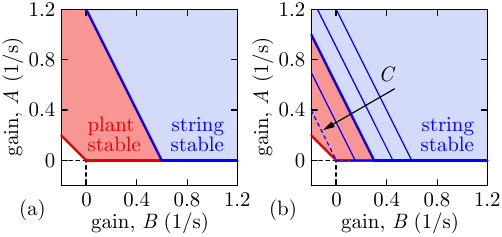}
\caption{
Stability chart~\cite{Ge2014} of CCC~(\ref{eq:system_nolag})-(\ref{eq:CCC}) with (a) ${C=0}$, (b) various ${C>0}$ values.
}
\label{fig:stability}
\vspace{-3mm}
\end{figure}


\section{CONTROL BARRIER FUNCTIONS}
\label{sec:CBF}

Consider a control system with state ${x \in \X}$, input ${u \in \U}$ and dynamics given by locally Lipschitz continuous functions ${f: \X \to \X}$ and ${g: \X \to \R^{n \times m}}$:
\begin{equation}
    \dot{x} = f(x) + g(x) u;
\label{eq:system}
\end{equation}
cf.~(\ref{eq:system_nolag}) with
${f(x) = \begin{bmatrix} \vL \!-\! v &\!\!\! - p(v) &\!\!\! \aL \end{bmatrix}^{\!\top\!}}$,
${g(x) = \begin{bmatrix} 0 &\!\!\! 1 &\!\!\! 0 \end{bmatrix}^{\!\top\!}}$.
With a locally Lipschitz continuous controller ${k: \X \to \U}$, ${u = k(x)}$, such as~(\ref{eq:CCC}), the corresponding closed loop system:
\begin{equation}
    \dot{x} = f(x) + g(x) k(x),
\label{eq:closedloop}
\end{equation}
with the initial condition ${x(0) = x_0 \in \X}$, has a unique solution $x(t)$, which we assume to exist for all ${t \geq 0}$.

We call~(\ref{eq:closedloop}) as safe if its solution $x(t)$ evolves within a safe set $\S$ for all time.
We consider $\S$, and its boundary $\bS$, to be given by a continuously differentiable function ${h: \X \to \R}$:
\begin{align}
\begin{split}
    \S & = \{x \in \X: h(x) \geq 0 \}, \\
    \bS & = \{x \in \X: h(x) = 0 \}.
\end{split}
\label{eq:safeset}
\end{align}
That is, ${h(x(t)) \geq 0}$ for all ${t \geq 0}$ indicates safety while ${h(x(t)) < 0}$ for any ${t \geq 0}$ is unsafe, where $h$ is selected based on the application; cf.~(\ref{eq:safesetD}),~(\ref{eq:safesetTH}) and~(\ref{eq:safesetTTC}).
To maintain ${h(x(t)) \geq 0}$, we rely on the derivative of $h$ along~(\ref{eq:system}):
\begin{equation}
    \dot{h}(x,u) =
    \underbrace{\grad{h}(x) f(x)}_{\L{f}{h(x)}}
    + \underbrace{\grad{h}(x) g(x)}_{\L{g}{h(x)}} u.
\label{eq:hdot}
\end{equation}
With this, Nagumo's theorem~\cite{nagumo1942lage} establishes safety for~(\ref{eq:closedloop}).

\begin{theorem}[\cite{nagumo1942lage}] \label{theo:Nagumo}
\textit{
Let $h$ satisfy ${\grad{h}(x) \neq 0}$, ${\forall x \in \bS}$.
System~(\ref{eq:closedloop}) is safe w.r.t.~$\S$ if and only if:
\begin{equation}
    \dot{h} \big( x,k(x) \big) \geq 0, \quad \forall x \in \bS.
\label{eq:Nagumo}
\end{equation}
}
\end{theorem}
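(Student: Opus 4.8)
The plan is to prove the two implications separately. Throughout I would use that, since $f$, $g$ and $k$ are locally Lipschitz, the right-hand side of~(\ref{eq:closedloop}) is continuous, so each solution $x(t)$ is $C^1$; consequently $t \mapsto h(x(t))$ is $C^1$ with derivative equal to $\dot h\big(x(t),k(x(t))\big)$ as in~(\ref{eq:hdot}). The forward (``only if'') direction is elementary; the converse will need the regularity hypothesis $\grad{h}(x) \neq 0$ on $\bS$ together with local Lipschitzness of the closed-loop vector field.

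For necessity, I would fix an arbitrary $x_0 \in \bS$ (which lies in $\S$, so the closed-loop trajectory from it stays in $\S$ by safety). Then $h(x(t)) \geq 0 = h(x_0)$ for all $t \geq 0$, so the forward difference quotient $\big(h(x(t)) - h(x_0)\big)/t$ is nonnegative for $t > 0$; letting $t \to 0^+$ gives $\dot h\big(x_0,k(x_0)\big) \geq 0$, and since $x_0 \in \bS$ was arbitrary,~(\ref{eq:Nagumo}) follows.

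For sufficiency I would argue by contradiction: assume~(\ref{eq:Nagumo}) holds but some closed-loop trajectory with $x_0 \in \S$ satisfies $h(x(T)) < 0$ for some $T > 0$. Set $t_0 := \inf\{ t \geq 0 : h(x(t)) < 0 \}$; then $t_0 \le T$, continuity forces $h(x(t_0)) = 0$ so $p := x(t_0) \in \bS$, and there are times $s_n \downarrow t_0$ with $h(x(s_n)) < 0$. The crux is a local estimate near $p$: because $\grad{h}(p) \neq 0$ and $h \in C^1$, on some neighborhood $\mathcal{N}$ of $p$ the zero level set of $h$ is a $C^1$ hypersurface and there are constants $c,\delta > 0$ such that every $x \in \mathcal{N}$ with $|h(x)| \leq \delta$ has a point $\bar x \in \bS$ with $|x - \bar x| \leq c\,|h(x)|$ (by the implicit function theorem, or by flattening $\bS$ so that $h$ becomes a coordinate). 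Shrinking $\mathcal{N}$, the locally Lipschitz map $x \mapsto \dot h\big(x,k(x)\big)$ admits a Lipschitz constant $L$ on $\mathcal{N}$, so~(\ref{eq:Nagumo}) applied at $\bar x$ yields
\[
\dot h\big(x,k(x)\big) \;\geq\; \dot h\big(\bar x,k(\bar x)\big) - L\,|x-\bar x| \;\geq\; -Lc\,|h(x)|
\]
for all such $x$. Since $x(t) \in \mathcal{N}$ and $|h(x(t))| \leq \delta$ on some interval $[t_0, t_0 + \tau]$, the scalar $\eta(t) := h(x(t))$ obeys $\dot\eta(t) \geq -Lc\,|\eta(t)|$ there, which on any subinterval where $\eta < 0$ (so $|\eta|=-\eta$) reads $\frac{d}{dt}\big(e^{-Lc\,t}\eta(t)\big) \geq 0$; combined with $\eta(t_0) = 0$, a standard first-exit-time argument gives $\eta(t) \geq 0$ throughout $[t_0, t_0+\tau]$, contradicting $\eta(s_n) < 0$ for $s_n \downarrow t_0$.

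I expect the main obstacle to be the near-boundary estimate $|x-\bar x| \leq c\,|h(x)|$ (and the claim that $\bS$ is locally a hypersurface): this is precisely where $\grad{h}\neq 0$ enters, and it has to be carried out locally around $p$ because $\bS$ need not be compact, so no global constants are available. Once that estimate is in hand, the Gronwall/comparison step and the bookkeeping with $t_0$ are routine. Equivalently, one can change coordinates near $p$ so that $h$ becomes the first coordinate $y_1$; then~(\ref{eq:Nagumo}) says the first component of the transformed vector field is nonnegative on $\{y_1 = 0\}$, and the same Lipschitz-plus-Gronwall argument applies directly to $y_1(t)$ --- which also makes transparent why the statement fails without a Lipschitz condition (e.g. for $\dot x = -\sqrt{|x|}$ with $h(x)=x$, the boundary inequality holds at $x=0$ yet the system is unsafe).
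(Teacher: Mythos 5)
First, note that the paper gives no proof of this statement at all: it is quoted from \cite{nagumo1942lage} as a classical result, so there is nothing in the paper to compare your argument against; your attempt has to stand on its own. Your necessity direction is correct and standard. Your sufficiency direction has the right architecture (first crossing time $t_0$, a boundary point $p$, a local comparison estimate), but it contains a genuine gap at the step where you assert that $x \mapsto \dot{h}(x,k(x)) = \grad{h}(x)\bigl(f(x)+g(x)k(x)\bigr)$ ``admits a Lipschitz constant $L$ on $\mathcal{N}$.'' The hypotheses give local Lipschitz continuity of $f$, $g$ and $k$, but $h$ is only continuously differentiable, so $\grad{h}$ is merely continuous; in the decomposition of $\dot{h}(x,k(x))-\dot{h}(\bar{x},k(\bar{x}))$ the term $\bigl(\grad{h}(x)-\grad{h}(\bar{x})\bigr)\bigl(f(\bar{x})+g(\bar{x})k(\bar{x})\bigr)$ is controlled only by a modulus of continuity $\omega$ of $\grad{h}$, not by $L|x-\bar{x}|$. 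The best differential inequality you can then extract is $\dot{\eta}(t) \geq -Lc|\eta(t)| - M\,\omega(c|\eta(t)|)$, and your Gronwall/first-exit step fails unless $\omega$ satisfies an Osgood condition --- exactly the failure mode of your own example $\dot{\eta} \geq -\sqrt{|\eta|}$ with $\eta(t_0)=0$. So the chain of inequalities does not close for general $C^1$ barrier functions.

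The standard repair is to run the identical first-crossing argument on $\rho(t) = d(x(t),\S)$ rather than on $-h(x(t))$: the distance function is globally $1$-Lipschitz, condition (\ref{eq:Nagumo}) together with $\grad{h}(x)\neq 0$ on $\bS$ is equivalent to the Bouligand tangent-cone condition $d\bigl(\bar{x}+sF(\bar{x}),\S\bigr)=o(s)$ at boundary points, and one obtains $D^+\rho(t) \leq L_F\,\rho(t)$ using only the Lipschitz constant of the closed-loop field $F=f+gk$, after which Gronwall gives $\rho\equiv 0$. (Your two-sided comparability of $|h(x)|$ and $d(x,\bS)$ near $p$, which you correctly derive from $\grad{h}(p)\neq 0$, is what transfers the conclusion back to $h$.) Alternatively, your proof is valid verbatim under the additional assumption that $\grad{h}$ is locally Lipschitz --- which happens to hold for every barrier function actually used in this paper, since $h_{\rm D}$, $h_{\rm TH}$, $h_{\rm TTC}$ and the extension $\he$ are affine in $x$ --- but as written it does not establish the theorem for general continuously differentiable $h$.
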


Condition~(\ref{eq:Nagumo}) means that the controller does not allow the system to leave the safe set $\S$ when it is at the boundary $\bS$.
To {\em certify} that~(\ref{eq:closedloop}) with a given controller $k$ is safe, one needs to verify that~(\ref{eq:Nagumo}) holds.
Yet,~(\ref{eq:Nagumo}) does not provide a constructive way to {\em synthesize} controllers for~(\ref{eq:system}), since it does not provide guidelines inside $\S$ (i.e., when ${h(x) > 0}$).

Control barrier functions (CBFs)~\cite{AmesXuGriTab2017} have been proposed for the purpose of safety-critical controller synthesis.

\begin{definition}[\cite{AmesXuGriTab2017}]
Function $h$ is a {\em control barrier function} for~(\ref{eq:system}) on $\S$ if there exists ${\alpha \in \Keinf}$\footnote{Function ${\alpha : \R \to \R}$ is of extended class-$\Kinf$ (${\alpha \in \Keinf}$) if it is continuous, strictly increasing, ${\alpha(0)=0}$ and ${\lim_{r \to \pm \infty} \alpha(r) = \pm \infty}$.} such that for all ${x \in \S}$:
\begin{equation}
    \sup_{u \in \U} \dot{h}(x,u) > - \alpha \big( h(x) \big).
\label{eq:CBF_condition}
\end{equation}
\end{definition}

Note that the $\sup$ on left-hand side of~(\ref{eq:CBF_condition}) gives $\L{f}{h(x)}$ if ${\L{g}{h(x)} = 0}$ and $\infty$ otherwise, thus~(\ref{eq:CBF_condition}) is equivalent to:
\begin{equation}
    \L{f}{h(x)} > - \alpha \big( h(x) \big), \quad \forall x \in \X \ {\rm s.t.}\ \L{g}{h(x)} = 0.
\label{eq:CBF_condition_rewritten}
\end{equation}
\cite{AmesXuGriTab2017} established safety-critical control with CBFs as follows.

\begin{theorem}[\cite{AmesXuGriTab2017}] \label{theo:CBF}
\textit{
If $h$ is a CBF for~(\ref{eq:system}) on $\S$, then any locally Lipschitz continuous controller $k$ that satisfies: 
\begin{equation}
    \dot{h} \big( x, k(x) \big) \geq - \alpha \big( h(x) \big)
\label{eq:safety_condition}
\end{equation}
for all ${x \in \S}$ renders~(\ref{eq:closedloop}) safe w.r.t.~$\S$.
}
\end{theorem}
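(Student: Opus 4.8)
The plan is to reduce the claim to Nagumo's theorem (Theorem~\ref{theo:Nagumo}), which is already available and whose regularity hypothesis turns out to be automatically satisfied here. First I would check that ${\grad h(x) \neq 0}$ on $\bS$ --- the assumption needed to invoke Theorem~\ref{theo:Nagumo} --- follows from $h$ being a CBF. Indeed, if ${\grad h(x) = 0}$ for some ${x \in \bS}$, then ${\L{f}{h(x)} = \grad h(x) f(x) = 0}$ and ${\L{g}{h(x)} = \grad h(x) g(x) = 0}$, so the supremum on the left-hand side of~(\ref{eq:CBF_condition}) equals $0$; but ${h(x) = 0}$ on $\bS$ and ${\alpha(0) = 0}$ because ${\alpha \in \Keinf}$, so~(\ref{eq:CBF_condition}) would assert ${0 > 0}$, a contradiction. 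Hence ${\grad h(x) \neq 0}$ for all ${x \in \bS}$.

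Next I would verify Nagumo's condition~(\ref{eq:Nagumo}) on $\bS$. Every ${x \in \bS}$ has ${h(x) = 0}$, and since ${\alpha \in \Keinf}$ we get ${\alpha(h(x)) = \alpha(0) = 0}$. Therefore the standing assumption~(\ref{eq:safety_condition}) on the controller, evaluated on ${\bS \subseteq \S}$, gives ${\dot h(x,k(x)) \geq -\alpha(h(x)) = 0}$ for all ${x \in \bS}$, which is exactly~(\ref{eq:Nagumo}). Since $k$ is locally Lipschitz, the closed loop~(\ref{eq:closedloop}) has the (assumed forward-complete) unique solution that Theorem~\ref{theo:Nagumo} presupposes, so applying that theorem yields safety of~(\ref{eq:closedloop}) with respect to $\S$.

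An alternative route, which I would mention in a remark because it delivers a slightly stronger conclusion, is to work directly with ${\eta(t) := h(x(t))}$: along~(\ref{eq:closedloop}) one has ${\dot\eta(t) = \dot h(x(t),k(x(t))) \geq -\alpha(\eta(t))}$ whenever ${x(t) \in \S}$, and comparing with the scalar system ${\dot y = -\alpha(y)}$, whose solutions emanating from ${y(0) \geq 0}$ never leave ${[0,\infty)}$ since ${\alpha(0) = 0}$, one obtains ${\eta(t) \geq 0}$ for all ${t \geq 0}$ together with a class-$\mathcal{KL}$ lower estimate on $\eta$.

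The only real subtlety is a regularity one. In the Nagumo-based argument it is the observation above that the CBF property rules out ${\grad h = 0}$ on $\bS$, so that Theorem~\ref{theo:Nagumo} genuinely applies; in the comparison-lemma argument it is that ${\alpha \in \Keinf}$ need not be locally Lipschitz, so the comparison system may have non-unique solutions and one must compare against its minimal solution (or argue by a first-exit-time contradiction, using ${\dot h(x,k(x)) \geq 0}$ on $\bS$). Everything else is immediate, and I would present the Nagumo-based proof as the primary one since it is the shortest given what has already been developed.
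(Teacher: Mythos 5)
The paper does not actually prove this theorem---it is imported from \cite{AmesXuGriTab2017}---so there is no in-paper proof to compare against; the closest the paper comes is the remark immediately following the statement, namely that \eqref{eq:safety_condition} implies \eqref{eq:Nagumo}, which is exactly the reduction you carry out. Your primary argument is correct and complete within the paper's framework. In particular, the observation that the \emph{strict} inequality in \eqref{eq:CBF_condition}, evaluated at a boundary point where ${h(x)=0}$ and hence ${\alpha(h(x))=0}$, forces ${\grad{h}(x) \neq 0}$ on $\bS$ is precisely the step needed to make Theorem~\ref{theo:Nagumo} (with its nonvanishing-gradient hypothesis) legitimately applicable, and it is the step most readers would gloss over; together with ${\dot{h}(x,k(x)) \geq -\alpha(0) = 0}$ on ${\bS \subseteq \S}$, safety follows. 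Your alternative comparison-lemma route is essentially the proof given in the cited reference: it buys the stronger conclusion (a class-$\mathcal{KL}$ lower bound on ${h(x(t))}$, hence robustness/attractivity of $\S$, not just invariance), at the cost of the regularity caveat you correctly flag, that ${\alpha \in \Keinf}$ need not be locally Lipschitz so one must compare against a maximal/minimal solution of ${\dot{y} = -\alpha(y)}$ rather than invoke uniqueness. Both routes are sound; given that the paper supplies Theorem~\ref{theo:Nagumo} as a black box, your choice of the Nagumo-based argument as primary is the right one here.
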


Note that if~(\ref{eq:safety_condition}) holds, then~(\ref{eq:Nagumo}) also does.
Furthermore, condition~(\ref{eq:safety_condition}) provides guidelines over the entire set $\S$ to synthesize controllers.
For example, CBFs are often used in {\em safety filters} that modify a desired but not necessarily safe controller ${\kd: \X \to \U}$ to a safe controller subject to~(\ref{eq:safety_condition}), in the form of an optimization problem (quadratic program):
\begin{align}
\begin{split}
    k(x) = \underset{u \in \U}{\operatorname{argmin}} & \quad \| u - \kd(x) \|^2 \\
    \text{s.t.} & \quad \dot{h}(x,u) \geq - \alpha \big( h(x) \big).
\end{split}
\label{eq:QP}
\end{align}
The solution of~(\ref{eq:QP}) can be given in closed form~\cite{Alan2022AV}:
\begin{align}
    k(x) & = \begin{cases}
        \kd(x) + \max\{0, \eta(x) \} \frac{\L{g}{h(x)}^\top}{\|\L{g}{h(x)}\|^2}, & {\rm if}\ \L{g}{h(x)} \neq 0, \\
        \kd(x), & {\rm if}\ \L{g}{h(x)} = 0,
    \end{cases} \nonumber \\
    \eta(x) & = -\L{f}{h(x)} - \L{g}{h(x)} \kd(x) - \alpha \big( h(x) \big).
\label{eq:QPsolu}
\end{align}

For single input systems like~(\ref{eq:system_nolag}), where $u$ and $\L{g}{h(x)}$ are scalars, the safety conditions greatly simplify.
If ${\L{g}{h(x)} < 0}$,~(\ref{eq:safety_condition}) is equivalent to:
\begin{equation}
    k(x) \leq \ks(x),
\label{eq:safety_condition_Lghneg}
\end{equation}
with:
\begin{equation}
    \ks(x) = -\frac{\L{f}{h(x)} + \alpha \big( h(x) \big)}{\L{g}{h(x)}}.
\label{eq:ks}
\end{equation}
If ${\L{g}{h(x)} > 0}$,~(\ref{eq:safety_condition}) yields:
\begin{equation}
    k(x) \geq \ks(x).
\label{eq:safety_condition_Lghpos}
\end{equation}
If ${\L{g}{h(x)} = 0}$,~(\ref{eq:CBF_condition_rewritten}) guarantees that~(\ref{eq:safety_condition}) holds for any ${k(x)}$.
Thus, for scalar input $u$, the safety filter~(\ref{eq:QPsolu}) becomes~\cite{Alan2022AV}:
\begin{equation}
    k(x) =
    \begin{cases}
        \min\{\kd(x),\ks(x)\}, & {\rm if}\ \L{g}{h(x)} < 0, \\
        \kd(x), & {\rm if}\ \L{g}{h(x)} = 0, \\
        \max\{\kd(x),\ks(x)\}, & {\rm if}\ \L{g}{h(x)} > 0.
    \end{cases}
\label{eq:QPsoluscalar}
\end{equation}

Finally, it is important to distinguish the case ${\L{g}{h(x)} \equiv 0}$, where the input $u$ does not affect safety directly in~(\ref{eq:hdot}) for any $x$.
Then, $h$ is not a CBF and safety-critical controller synthesis is not possible directly with $h$ (unless~(\ref{eq:closedloop}) is safe for any $k(x)$).
Instead, one may construct an {\em extended CBF}~\cite{Nguyen2016, xiao2019cbf,ames2020integral} with a continuously differentiable ${\alpha \in \Keinf}$: 
\begin{equation}
    \he(x) = \L{f}{h(x)} + \alpha \big( h(x) \big),
\label{eq:CBFextension}
\end{equation}
that is associated with the {\em extended safe set}:
\begin{align}
\begin{split}
    \Se & = \{x \in \X: \he(x) \geq 0 \}, \\
    \bSe & = \{x \in \X: \he(x) = 0 \}.
\end{split}
\label{eq:safeset_extended}
\end{align}
If the system is kept inside $\Se$, condition~(\ref{eq:safety_condition}) holds, and the system also evolves within $\S$.
Ultimately, safety w.r.t.~the intersection ${\S \cap \Se}$ of the two sets is guaranteed as follows.

\begin{corollary}[\cite{xiao2019cbf}] \label{cor:extendedCBF}
\textit{
If ${\L{g}{h(x)} \equiv 0}$ and $\he$ in~(\ref{eq:CBFextension}) is a CBF for~(\ref{eq:system}) on $\Se$ with ${\alphae \in \Keinf}$, then any locally Lipschitz continuous controller $k$ that satisfies: 
\begin{equation}
    \dot{h}_{\rm e} \big( x, k(x) \big) \geq - \alphae \big( \he(x) \big)
\label{eq:safety_condition_extended}
\end{equation}
for all ${x \in \Se}$ renders~(\ref{eq:closedloop}) safe w.r.t.~${\S \cap \Se}$.
}
\end{corollary}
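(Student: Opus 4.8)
The plan is to invoke Theorem~\ref{theo:CBF} once, applied not to $h$ but to the extended data $(\he,\Se,\alphae)$, and then to push the resulting forward invariance of $\Se$ back onto $\S$ via a one-dimensional comparison argument. I work under the standing assumptions of the section: the closed-loop solution $x(t)$ of~(\ref{eq:closedloop}) exists for all ${t\ge 0}$, $k$ is locally Lipschitz, and $x(0)=x_0\in\S\cap\Se$ (i.e.\ ${h(x_0)\ge 0}$ and ${\he(x_0)\ge 0}$); I also take $\he$ in~(\ref{eq:CBFextension}) to be continuously differentiable as required for Theorem~\ref{theo:CBF} to apply to it (this holds, e.g., when $h$ is $C^2$, $f$ is $C^1$ and $\alpha$ is $C^1$, which covers the car-following setting).

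\textbf{Step 1 (invariance of $\Se$).} Since $\he$ is a CBF for~(\ref{eq:system}) on $\Se$ with ${\alphae\in\Keinf}$ and the locally Lipschitz controller $k$ satisfies~(\ref{eq:safety_condition_extended}) for all ${x\in\Se}$, Theorem~\ref{theo:CBF}---read with the substitution ${h\mapsto\he}$, ${\S\mapsto\Se}$, ${\alpha\mapsto\alphae}$---gives that~(\ref{eq:closedloop}) is safe w.r.t.\ $\Se$, hence ${\he(x(t))\ge 0}$ for all ${t\ge 0}$.

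\textbf{Step 2 (a scalar differential inequality for $h$).} As $x(\cdot)$ and $h$ are $C^1$, the map ${t\mapsto h(x(t))}$ is differentiable with derivative ${\L{f}{h(x(t))}+\L{g}{h(x(t))}k(x(t))}$. The hypothesis ${\L{g}{h(x)}\equiv 0}$ kills the input term, and the definition~(\ref{eq:CBFextension}) of $\he$ rewrites what remains, so along the trajectory
\[
  \frac{\mathrm{d}}{\mathrm{d}t}h(x(t)) \;=\; \L{f}{h(x(t))} \;=\; \he(x(t)) - \alpha\big(h(x(t))\big) \;\ge\; -\alpha\big(h(x(t))\big),
\]
the last inequality being exactly Step 1. \textbf{Step 3 (nonnegativity of $h$).} It remains to deduce ${h(x(t))\ge 0}$ for all ${t\ge 0}$ from ${h(x_0)\ge 0}$ and this inequality. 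Suppose not, and fix ${t^*>0}$ with ${h(x(t^*))<0}$; set ${t_0:=\sup\{t\in[0,t^*]:h(x(t))\ge 0\}}$, which is well defined because $0$ lies in the set. By continuity ${h(x(t_0))=0}$, ${t_0<t^*}$, and ${h(x(t))<0}$ on ${(t_0,t^*]}$. On that interval, ${\alpha\in\Keinf}$ being strictly increasing with ${\alpha(0)=0}$ forces ${\alpha(h(x(t)))<0}$, so Step 2 yields ${\tfrac{\mathrm{d}}{\mathrm{d}t}h(x(t))>0}$; hence $h(x(\cdot))$ is strictly increasing on ${[t_0,t^*]}$ and ${h(x(t^*))>h(x(t_0))=0}$, contradicting ${h(x(t^*))<0}$. (Alternatively, invoke the usual comparison lemma for ${\dot h\ge-\alpha(h)}$ against ${\dot\mu=-\alpha(\mu)}$, ${\mu(0)=0}$, whose unique solution is ${\mu\equiv 0}$ since the $C^1$ map $\alpha$ is locally Lipschitz.) Thus ${x(t)\in\S}$ for all ${t\ge 0}$; with Step 1 this gives ${x(t)\in\S\cap\Se}$ for all ${t\ge 0}$, i.e.\ safety w.r.t.\ ${\S\cap\Se}$.

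The crux is Step 3: converting the scalar inequality ${\dot h\ge-\alpha(h)}$ into ${h\ge 0}$. It relies only on the sign and monotonicity of ${\alpha\in\Keinf}$ near the origin (and, for the comparison-lemma variant, on $\alpha$ being regular enough for uniqueness of the comparison ODE, ensured here by continuous differentiability of $\alpha$). Step 2 is where ${\L{g}{h(x)}\equiv 0}$ is indispensable: it is exactly what makes $\he$ coincide with $\dot h$ along every trajectory, so that controlling $\he$ controls $\dot h$. Everything else is a direct citation of Theorem~\ref{theo:CBF}.
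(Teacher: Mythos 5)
Your proof is correct. Note that the paper itself does not prove this corollary---it is imported from \cite{xiao2019cbf}, with only the one-sentence justification preceding it (``If the system is kept inside $\Se$, condition~(\ref{eq:safety_condition}) holds, and the system also evolves within $\S$''). Your argument is precisely that sketch made rigorous, and the two routes differ only in how the second half is closed. The paper's implicit route is to apply Theorem~\ref{theo:CBF} twice: once to $(\he,\Se,\alphae)$ to keep the trajectory in $\Se$, and then again to $h$ itself, since $\he(x(t))\ge 0$ together with ${\L{g}{h(x)} \equiv 0}$ is exactly condition~(\ref{eq:safety_condition}). Your Steps 2--3 instead convert $\he(x(t))\ge 0$ into the scalar inequality $\tfrac{\mathrm{d}}{\mathrm{d}t}h(x(t))\ge -\alpha(h(x(t)))$ and finish with an elementary first-crossing/comparison argument. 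This is arguably the more careful route: a literal second invocation of Theorem~\ref{theo:CBF} would require $h$ to be a CBF, which under ${\L{g}{h(x)} \equiv 0}$ amounts to the \emph{strict} inequality $\he(x)>0$ on $\S$ (cf.~(\ref{eq:CBF_condition_rewritten}))---not guaranteed by the hypotheses---whereas your comparison step needs only the non-strict bound delivered by Step 1. The price is the mild extra regularity you correctly flag ($\he$ continuously differentiable, $\alpha$ regular enough for the comparison), which the paper's setting supplies. No gaps.
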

\noindent With this result, safety filters incorporating~(\ref{eq:safety_condition_extended}) can be constructed analogously to~(\ref{eq:QP})-(\ref{eq:QPsolu}), with $\he$ instead of $h$.

Accordingly, safety certification by Nagumo's theorem---the extension of Theorem~\ref{theo:Nagumo} for ${\L{g}{h(x)} \equiv 0}$---is performed on the boundary of ${\S \cap \Se}$.
This boundary is located at ${\bSe \cap \S}$ (where ${\he(x) = 0}$ and ${h(x) \geq 0}$) and at ${\bS \cap \Se}$ (where ${h(x) = 0}$ and ${\he(x) \geq 0}$).
Note that only the former case needs further analysis, since the latter case implies~${\he(x) = \dot{h} \big( x,k(x) \big) \geq 0}$, and the system cannot leave ${\S \cap \Se}$ along this boundary per Theorem~\ref{theo:Nagumo}.
Thus, safety certification is summarized as follows.
\begin{corollary} \label{cor:extendedNagumo}
\textit{
Let ${\L{g}{h(x)} \equiv 0}$ and $\he$ in~(\ref{eq:CBFextension}) satisfy ${\grad{\he}(x) \!\neq\! 0}$, ${\forall x \in \bSe}$.
System~(\ref{eq:closedloop}) is safe w.r.t.~${\S \!\cap \!\Se}$ if:
\begin{equation}
    \dot{h}_{\rm e} \big( x,k(x) \big) \geq 0, \quad \forall x \in \bSe \cap \S.
\label{eq:Nagumo_extended}
\end{equation}
}
\end{corollary}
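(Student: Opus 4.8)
The plan is to prove forward invariance of ${\S \cap \Se}$ under the closed loop~(\ref{eq:closedloop}) by checking, at every point of $\partial(\S \cap \Se)$, that the vector field ${f+gk}$ does not point strictly out of ${\S \cap \Se}$ --- a Nagumo-type argument specialized to an intersection of two safe sets. Since $h$ and $\he$ are continuous, $\S$, $\Se$ and ${\S \cap \Se}$ are closed and ${\partial(\S \cap \Se) \subseteq (\bS \cap \Se) \cup (\S \cap \bSe)}$, so it suffices to analyze the field on these two boundary pieces, whose overlap is the corner set ${\bS \cap \bSe}$.

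On ${\bS \cap \Se}$ we have ${h(x)=0}$, hence, using ${\L{g}{h(x)} \equiv 0}$, definition~(\ref{eq:CBFextension}) and ${\alpha(0)=0}$,
\begin{equation*}
\dot h\big(x,k(x)\big) = \L{f}{h(x)} = \he(x) - \alpha\big(h(x)\big) = \he(x) \geq 0 ,
\end{equation*}
so trajectories cannot cross out of $\S$ along this piece (and on the corner ${\bS \cap \bSe}$ one moreover has ${\dot h = \he = 0}$). On ${\bSe \cap \S}$ we have ${\he(x)=0}$, and hypothesis~(\ref{eq:Nagumo_extended}) together with ${\grad{\he}(x)\neq 0}$ is exactly the sub-tangentiality condition of Theorem~\ref{theo:Nagumo} for $\Se$, restricted to the portion of $\bSe$ lying in $\S$. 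The only nonroutine calculation here is the displayed identity.

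The step I expect to be the real obstacle is gluing these two observations into invariance of the intersection, precisely because on the corner set ${\bS \cap \bSe}$ both constraints are active at once, ${\dot h = \he = 0}$ while only ${\dot h_{\rm e}\geq 0}$ is known, so neither $h$ nor $\he$ can be fed into Theorem~\ref{theo:Nagumo} in isolation. I would handle this with a simultaneous last-exit-time argument: set ${t^\ast = \sup\{T\geq 0 : x(t)\in\S\cap\Se\ \forall t\in[0,T]\}}$ and suppose ${t^\ast<\infty}$, so by closedness ${x(t^\ast)\in\S\cap\Se}$. On a right neighborhood of $t^\ast$ the trajectory remains in $\S$: directly if ${h(x(t^\ast))>0}$, and if ${h(x(t^\ast))=0}$ then ${\dot h(x(t^\ast)) = \he(x(t^\ast)) \geq 0}$ and, along the trajectory, ${\dot h = \he - \alpha(h) \geq -\alpha(h)}$ whenever ${\he \geq 0}$, so a comparison-lemma argument as in Theorem~\ref{theo:CBF} keeps ${h\geq 0}$ as long as ${\he\geq 0}$. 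Conversely, while the trajectory is in $\S$ every boundary point of $\Se$ it visits lies in ${\bSe\cap\S}$, where ${\dot h_{\rm e}\geq 0}$, so applying Theorem~\ref{theo:Nagumo} locally to $\he$ keeps ${\he(x(t))\geq 0}$. These two facts bootstrap each other on $[t^\ast,t^\ast+\varepsilon]$ and show $x$ cannot leave ${\S\cap\Se}$ at $t^\ast$, contradicting the choice of $t^\ast$. Throughout, we invoke the standing assumption that the closed-loop solution exists for all ${t\geq 0}$.
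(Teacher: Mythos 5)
Your proposal is correct and takes essentially the same route as the paper's own justification (which appears in the paragraph preceding the corollary rather than in the Appendix): decompose the boundary of ${\S \cap \Se}$ into ${\bS \cap \Se}$, where ${\L{g}{h(x)} \equiv 0}$, ${h(x)=0}$ and ${\alpha(0)=0}$ give ${\dot{h}(x,k(x)) = \L{f}{h(x)} = \he(x) - \alpha(h(x)) = \he(x) \geq 0}$, and ${\bSe \cap \S}$, where hypothesis~(\ref{eq:Nagumo_extended}) supplies the Nagumo condition. Your additional last-exit-time/comparison-lemma argument for gluing the two pieces at the corner ${\bS \cap \bSe}$ addresses a point the paper passes over silently, and is a welcome tightening rather than a departure from its approach.
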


Again, notice that~(\ref{eq:Nagumo_extended}) holds if~(\ref{eq:safety_condition_extended}) does.
In the case of a single input,~(\ref{eq:safety_condition_extended}) as well as the corresponding safety filter can be expressed in simple form.
Analogously to the non-extended case, formulas~(\ref{eq:safety_condition_Lghneg}),~(\ref{eq:safety_condition_Lghpos}) and~(\ref{eq:QPsoluscalar}) can be used with:
\begin{equation}
    \ks(x) = -\frac{\L{f}{\he(x)} + \alphae \big( \he(x) \big)}{\L{g}{\he(x)}},
\label{eq:ks_extended}
\end{equation}
cf.~(\ref{eq:ks}).


\section{SAFE CONNECTED CRUISE CONTROL}
\label{sec:safeCCC}

Now, we apply CBF theory to analyze the safety of CCC~(\ref{eq:system_nolag})-(\ref{eq:CCC}), and to synthesize safety-critical CCC laws.

\subsection{Safe Time Headway}

First, we characterize the safety of CCC w.r.t.~the time headway criterion in~(\ref{eq:safesetTH}).
For brevity, we introduce
${\bar{\kappa} = 1/\TH}$.
The gradient and Lie derivatives of $h_{\rm TH}$ in~(\ref{eq:hdot}) read:
\begin{align}
\begin{split}
    \grad{h_{\rm TH}}(x) & = \begin{bmatrix} \bar{\kappa} & - 1 & 0 \end{bmatrix}, \quad
    \L{g}{h_{\rm TH}(x)} = - 1, \\
    \L{f}{h_{\rm TH}(x)} & = \bar{\kappa} (\vL - v) + p(v).
\end{split}
\end{align}
Note that both
${\grad{h_{\rm TH}}(x) \neq 0}$, ${\forall x \in \partial \S_{\rm TH}}$
and~(\ref{eq:CBF_condition_rewritten}) hold, hence Theorems~\ref{theo:Nagumo} and~\ref{theo:CBF} are applicable for certifying safety and synthesizing safety-critical controllers, respectively.

We first use Theorem~\ref{theo:Nagumo} to analyze the safety of the CCC introduced in~(\ref{eq:CCC}), i.e., ${u = k(x) = \kd(x)}$.
Since the input is scalar and ${\L{g}{h_{\rm TH}(x)} < 0}$,
~(\ref{eq:Nagumo}) is equivalent to:
\begin{equation}
    \ks(x) - \kd(x) \geq 0, \quad \forall x \in \X \ {\rm s.t.}\ h_{\rm TH}(x) = 0,
\label{eq:Nagumo_special}
\end{equation}
cf.~(\ref{eq:safety_condition_Lghneg}), where the safe input defined in~(\ref{eq:ks}) is given by:
\begin{equation}
    \ks(x) = \bar{\kappa}(\vL-v) + p(v) + \alpha \big( \bar{\kappa}(D-\Dsf) - v \big).
\label{eq:ksTH}
\end{equation}
By analyzing under what conditions~(\ref{eq:Nagumo_special}) holds, we arrive at the following result.
For the detailed analysis and the proof of this result, please see the Appendix.
\begin{theorem} \label{theo:nolag_TH}
\textit{
System~(\ref{eq:system_nolag}) with ${u \!=\! k(x) \!=\! \kd(x)}$ given by~(\ref{eq:CCC}), ${A, B \!\geq\! 0}$ and ${C=0}$ is safe w.r.t.~$\S_{\rm TH}$ in~(\ref{eq:safesetTH}) if:
\begin{itemize}
\item
${v \geq 0}$, ${\Dst \geq \Dsf}$ and
${B = \bar{\kappa} \geq \kappa}$; or
\item
${v,\vL \in [0,\bar{v}]}$ with some ${\bar{v} \geq 0}$,
${\Dst>\Dsf}$,
${\bar{\kappa} \geq \kappa}$
and:
\begin{equation}
    A \geq \frac{|\bar{\kappa}-B| \bar{v}}{\kappa (\Dst - \Dsf)}.
\label{eq:TH_safety_nolag}
\end{equation}
\end{itemize}
}
\end{theorem}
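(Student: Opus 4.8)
The plan is to verify condition~(\ref{eq:Nagumo_special}), i.e. that $\ks(x) - \kd(x) \geq 0$ holds for every state $x$ with $h_{\rm TH}(x) = 0$, which by Theorem~\ref{theo:Nagumo} is exactly what certifies safety w.r.t.~$\S_{\rm TH}$. Using~(\ref{eq:ksTH}) with $C=0$, and noting that on the boundary $\bar\kappa(D-\Dsf) - v = 0$ so that $\alpha(\cdot)$ evaluates at $0$ and vanishes, the quantity $\ks(x) - \kd(x)$ simplifies to $\bar\kappa(\vL - v) + p(v) - \kd(x)$. Substituting the CCC law~(\ref{eq:CCC}) with $C=0$ and the policies~(\ref{eq:VW}), the resistance term $p(v)$ cancels, and we are left with an expression involving $A(V(D)-v)$, $B(W(\vL)-v)$, and $\bar\kappa(\vL-v)$, to be shown nonnegative whenever $D = \Dsf + v/\bar\kappa$.

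First I would reduce to the regime where the $\min$'s in $V$ and $W$ are not active, using the speed-limit assumptions $v,\vL \in [0,\bar v]$ (or $v \geq 0$ in the first bullet): I expect one shows $W(\vL) = \vL$ and that the relevant branch of $V$ is $\kappa(D - \Dst)$, so on the boundary $V(D) = \kappa(v/\bar\kappa + \Dsf - \Dst) = (\kappa/\bar\kappa)v - \kappa(\Dst - \Dsf)$. Plugging this in, $\ks - \kd$ becomes a linear combination of $v$ and $\vL$: the coefficient of $\vL$ turns out to be $B + \bar\kappa$ (nonnegative, fine), the coefficient of $v$ collects the terms $-A + A\kappa/\bar\kappa - B - \bar\kappa$ (i.e. depends on whether $\kappa \le \bar\kappa$), and there is a constant term $-A\kappa(\Dst - \Dsf) \leq 0$ coming from the standstill offset. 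The key structural fact to exploit is that on the boundary $v = \bar\kappa(D-\Dsf)$ and $D \geq \Dsf$, so $v \geq 0$; combined with $\vL \geq 0$ this lets one drop the $(B+\bar\kappa)\vL$ term. What remains is to show $(\text{coefficient of } v)\, v \geq A\kappa(\Dst - \Dsf)$, or handle it via the explicit bound~(\ref{eq:TH_safety_nolag}).

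For the first bullet ($B = \bar\kappa \geq \kappa$, $\Dst \geq \Dsf$), I would check that the $v$-coefficient simplifies favorably — with $B = \bar\kappa$ the $-B - \bar\kappa$ part is cancelled against the $+B$-type contributions from $\bar\kappa\vL$ after using $\vL \ge 0$, and the surviving $v$-term has coefficient $A(\kappa/\bar\kappa - 1) \le 0$ but is dominated because of an inequality I expect to be $v \ge 0$ making $A(\kappa/\bar\kappa-1)v$ combine with $-A\kappa(\Dst-\Dsf)\le 0$... here I anticipate the actual argument reorganizes terms so that the troublesome pieces have a definite sign; the condition $\Dst \ge \Dsf$ is what kills the constant term. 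For the second bullet, the worst case is $v = \bar v$, $\vL = 0$ (maximizing the negative contributions), and~(\ref{eq:TH_safety_nolag}) is precisely the threshold on $A$ making $\ks - \kd \ge 0$ there; the $|\bar\kappa - B|$ in the numerator arises because the sign of the $(\bar\kappa - B)$-dependent cross term between $v$ and $\vL$ is not fixed, so one bounds it by its absolute value times $\bar v$.

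The main obstacle I expect is the bookkeeping of which branch of the piecewise-linear $\min\{\kappa(D-\Dst),\vmax\}$ is active on the boundary set $\{h_{\rm TH}=0\}$, since $D$ ranges over $[\Dsf,\infty)$ and for large $D$ one has $V(D) = \vmax$ while simultaneously $v = \bar\kappa(D-\Dsf)$ may exceed $\bar v$ — so one must argue that the hypothesis $v \le \bar v$ restricts $D$ to the linear branch, or handle the saturated branch separately (where $V(D) = \vmax \ge v$ makes the $A$-term nonnegative outright). Getting this case split clean, and tracking the sign of $\bar\kappa - B$ versus $\kappa - \bar\kappa$ throughout, is the delicate part; the rest is linear algebra in $v,\vL$ with the boundary constraint substituted in.
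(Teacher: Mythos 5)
Your overall strategy is the same as the paper's---certify safety via Theorem~\ref{theo:Nagumo} by checking ${\ks(x)-\kd(x)\geq 0}$ on ${\{h_{\rm TH}=0\}}$, note that the $\alpha$-term vanishes there, and reduce to a linear inequality in $v$ and $\vL$---but your bookkeeping contains sign errors that prevent the argument from closing. In ${\ks(x)-\kd(x)}$ the coefficient of $\vL$ is ${\bar{\kappa}-B}$, not ${B+\bar{\kappa}}$: the term ${B(W(\vL)-v)}$ sits inside $\kd$, which is \emph{subtracted}. Consequently the $\vL$-term cannot be ``dropped by nonnegativity''; its indefinite sign is exactly the source of the ${|\bar{\kappa}-B|}$ in~(\ref{eq:TH_safety_nolag}), as you correctly (but inconsistently with your earlier claim) observe later. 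Likewise the constant term is ${+A\kappa(\Dst-\Dsf)\geq 0}$, not ${-A\kappa(\Dst-\Dsf)\leq 0}$: it is the \emph{helpful} term that dominates the cross term, which is why the theorem asks for $A$ \emph{large}; with your sign, larger $A$ would appear to hurt safety, inverting the conclusion. These errors are why your first bullet does not close (you concede you cannot control the ``troublesome pieces''). With correct signs it is immediate: after adding ${Ah_{\rm TH}(x)=0}$ the paper obtains ${\ks-\kd \geq A(\bar{\kappa}-\kappa)(D-\Dsf)+A\kappa(\Dst-\Dsf)+(\bar{\kappa}-B)(\vL-v)}$, and ${B=\bar{\kappa}}$ kills the cross term while ${\bar{\kappa}\geq\kappa}$, ${D\geq\Dsf}$ (from ${v\geq0}$ on the boundary) and ${\Dst\geq\Dsf}$ make every remaining term nonnegative.

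The second issue is the saturation in $V$ and $W$, which you flag as ``the main obstacle'' and propose to handle by a branch case split. The paper needs no case split: since ${A,B\geq 0}$ and $\kd$ enters with a minus sign, the one-sided bounds ${V(D)\leq\kappa(D-\Dst)}$ and ${W(\vL)\leq\vL}$ (which hold on \emph{both} branches of the $\min$) immediately give ${\ks-\kd\geq}$ the unsaturated expression, so proving safety without saturation proves it with saturation. Your proposed fallback for the saturated branch is also backwards: ${V(D)=\vmax\geq v}$ makes ${-A(V(D)-v)\leq 0}$, i.e., the $A$-term of ${\ks-\kd}$ is non\emph{positive} there, so it does not make the saturated case safe ``outright.'' In short, the skeleton of your plan matches the paper, but as written the computation would not produce~(\ref{eq:TH_safety_nolag}) or verify the first bullet; fixing the two sign flips and replacing the case split with the monotone bound on $V$, $W$ recovers the paper's proof.
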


\begin{figure}
\centering
\includegraphics[scale=1]{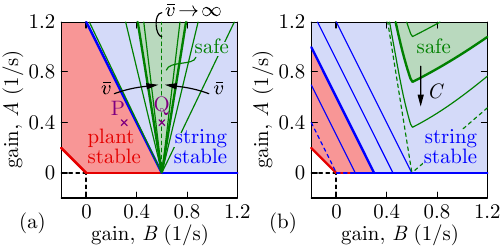}
\caption{
Safety charts of CCC~(\ref{eq:system_nolag})-(\ref{eq:CCC}) (a) w.r.t.~the time headway criterion~(\ref{eq:safesetTH}) with ${C=0}$ and various maximum speed $\bar{v}$;
(b) w.r.t.~the distance and time to conflict in~(\ref{eq:safesetD}) and (\ref{eq:safesetTTC}), respectively, with various $C$.
}
\label{fig:safety}
\vspace{-3mm}
\end{figure}

Condition~(\ref{eq:TH_safety_nolag}) can be visualized in the space $(B,A)$ of control gains, resulting in the {\em safety chart} in Fig.~\ref{fig:safety}(a).
The safe domain---associated with a provably safe choice of control gains---is shown for ${\bar{\kappa} = \kappa}$ and the maximum speed ${\bar{v}=15\,{\rm m/s}}$ with thick green boundary and green shading, on top of the stable domains (red and blue) that were plotted in Fig.~\ref{fig:stability}(a).
Additional boundaries are shown for ${\bar{v}=5, 10}$ and ${20\,{\rm m/s}}$ (thin lines) and the limit ${\bar{v} \to \infty}$ (dashed line).
As the maximum speed $\bar{v}$ increases, the V-shaped safe region closes to a single line given by the first bullet point in Theorem~\ref{theo:nolag_TH}.
The safe and unsafe simulations in Fig.~\ref{fig:simulation} correspond to points P and Q, respectively, that indeed lie in the safe and unsafe domains.
Note that safety w.r.t.~time headway can be achieved even without acceleration feedback (${C=0}$), while the response to acceleration (via ${C \aL}$ in~(\ref{eq:CCC})) will be necessary for safety w.r.t.~distance and time-to-collision.

The safety charts allow us to select the parameters of the CCC~(\ref{eq:CCC}) in a safe way.
Alternatively, one can also synthesize a safety-critical controller via Theorem~\ref{theo:CBF}, by viewing~(\ref{eq:CCC}) as desired input and using a CBF-based safety filter.
Since ${\L{g}{h_{\rm TH}(x)} < 0}$, the safety filter~(\ref{eq:QPsoluscalar}) simplifies to:
\begin{equation}
    k(x) = \min\{\kd(x), \ks(x)\}.
\label{eq:QPsolumin}
\end{equation}
The behavior of the safety filter in~(\ref{eq:ksTH}) and~(\ref{eq:QPsolumin}) is demonstrated in Fig.~\ref{fig:simulation} by dashed lines.
The safety filter is applied on the desired controller~(\ref{eq:CCC}) with the unsafe gains corresponding to the dash-dot lines (cf.~point Q in Fig.~\ref{fig:safety}(a)) and ${\alpha(r)=r}$.
The end result is provably safe CCC.

In conclusion, safety can be guaranteed both by controller tuning through safety charts and by applying safety filters.
Safety charts combined with stability charts (or other analysis on performance) provide both safe and performant controllers.
However, if safe regions are too small, or do not overlap with stable regions, one may design CCC based on performance only (i.e., properties like string stability), and apply safety filters.
Safety filters ensure safety even when safe parameters for nominal CCC laws are hard to realize.


\subsection{Safe Distance and Time to Conflict}
Next, we address safety w.r.t.~distance, as in~(\ref{eq:safesetD}), for which:
\begin{align}
\begin{split}
    \grad{h_{\rm D}}(x) & =
    \begin{bmatrix}
    1 & 0 & 0
    \end{bmatrix}, \quad
    \L{g}{h_{\rm D}(x)} \equiv 0, \\
    \L{f}{h_{\rm D}(x)} & = \vL - v.
\end{split}
\end{align}
Since ${\L{g}{h_{\rm D}(x)} \equiv 0}$, an extended CBF must be constructed from $h_{\rm D}$ via~(\ref{eq:CBFextension}).
We do this by the following observation.
\begin{observation}
For system~(\ref{eq:system_nolag}), the time to conflict-based safety measure $h_{\rm TTC}$ in~(\ref{eq:safesetTTC}) can be expressed using the distance-based safety indicator $h_{\rm D}$ in~(\ref{eq:safesetD}) as:
\begin{equation}
    h_{\rm TTC}(x) = \L{f}{h_{\rm D}(x)} + h_{\rm D}(x)/\TTC.
\end{equation}
That is, $h_{\rm TTC}$ is an extension~(\ref{eq:CBFextension}) of $h_{\rm D}$ with 
${\alpha(r) = r/\TTC}$.
\end{observation}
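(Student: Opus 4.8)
The plan is to prove the identity by direct substitution and then identify the relevant class-$\Keinf$ function. First I would recall from~(\ref{eq:safesetD}) that $h_{\rm D}(x) = D - \Dsf$ and from~(\ref{eq:safesetTTC}) that $h_{\rm TTC}(x) = (D-\Dsf)/\TTC + \vL - v$. Then I would compute the Lie derivative $\L{f}{h_{\rm D}(x)} = \grad{h_{\rm D}}(x)\,f(x)$: since $\grad{h_{\rm D}}(x) = \begin{bmatrix} 1 & 0 & 0 \end{bmatrix}$ and $f(x) = \begin{bmatrix} \vL - v & -p(v) & \aL \end{bmatrix}^{\!\top}$, this gives $\L{f}{h_{\rm D}(x)} = \vL - v$, exactly as stated in the text just before the Observation. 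Adding $h_{\rm D}(x)/\TTC = (D-\Dsf)/\TTC$ then yields $\L{f}{h_{\rm D}(x)} + h_{\rm D}(x)/\TTC = (\vL - v) + (D-\Dsf)/\TTC = h_{\rm TTC}(x)$, which is the claimed formula.

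For the second assertion, I would compare the right-hand side of this identity with the definition of an extended CBF in~(\ref{eq:CBFextension}), i.e., $\he(x) = \L{f}{h(x)} + \alpha(h(x))$. Taking $h = h_{\rm D}$ and $\alpha(r) = r/\TTC$, the computation above shows $\he = h_{\rm TTC}$. It then remains to verify that $r \mapsto r/\TTC$ is an admissible choice: since $\TTC > 0$, this map is smooth (hence continuously differentiable), strictly increasing, vanishes at $r = 0$, and satisfies $\lim_{r \to \pm\infty} r/\TTC = \pm\infty$, so it lies in $\Keinf$, as required in~(\ref{eq:CBFextension}).

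The argument is pure definition-chasing, so there is no genuine obstacle; the one point worth checking carefully is that the standstill offset $\Dsf$ enters $h_{\rm D}$ and $h_{\rm TTC}$ consistently, so that no spurious constant survives after forming $\L{f}{h_{\rm D}(x)} + h_{\rm D}(x)/\TTC$. I would also note, as a consequence relevant to the analysis that follows rather than to the Observation itself, that once $h_{\rm TTC}$ is recognized as an extension of $h_{\rm D}$, Corollary~\ref{cor:extendedCBF} applies with $\he = h_{\rm TTC}$: keeping the closed loop inside $\Se = \S_{\rm TTC}$ automatically enforces $h_{\rm D}(x) \ge 0$ along trajectories, so distance safety reduces to a condition stated purely in terms of $h_{\rm TTC}$.
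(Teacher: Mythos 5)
Your proposal is correct and matches the paper's (implicit) justification of the Observation, which is exactly this direct substitution: $\L{f}{h_{\rm D}(x)} = \vL - v$ plus $h_{\rm D}(x)/\TTC$ recovers $h_{\rm TTC}(x)$, with $\alpha(r) = r/\TTC$ a continuously differentiable member of $\Keinf$ since $\TTC > 0$. The extra checks you flag (consistency of the $\Dsf$ offset, admissibility of $\alpha$) are sound and add nothing beyond what the paper takes for granted.
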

Thus, $h_{\rm TTC}$ is regarded as extended CBF $\he$, yielding:
\begin{align}
\begin{split}
    \grad{h_{\rm TTC}}(x) & =
    \begin{bmatrix}
    \bar{\kappa} & - 1 & 1
    \end{bmatrix}, \quad
    \L{g}{h_{\rm TTC}(x)} = - 1, \\
    \L{f}{h_{\rm TTC}(x)} & = \bar{\kappa} (\vL - v) + \aL + p(v),
\end{split}
\end{align}
where ${\bar{\kappa} = 1/\TTC}$.
Observe that
${\grad{h_{\rm TTC}}(x) \neq 0}$, ${\forall x \in \partial \S_{\rm TTC}}$
holds and $h_{\rm TTC}$ is in fact a CBF.
Thus, Corollaries~\ref{theo:Nagumo} and~\ref{theo:CBF} apply, and safety is established w.r.t.~${\S_{\rm D} \cap \S_{\rm TTC}}$ as follows.
\begin{proposition}
\textit{
System~(\ref{eq:system_nolag}) is safe w.r.t.~${\S_{\rm D} \cap \S_{\rm TTC}}$ given by (\ref{eq:safesetD}) and (\ref{eq:safesetTTC}) if~(\ref{eq:Nagumo_extended}) holds (with ${\he(x) = h_{\rm TTC}(x)}$) for a given controller, ${u = k(x)}$.
Moreover, any controller, ${u = k(x)}$, that satisfies~(\ref{eq:safety_condition_extended}) renders~(\ref{eq:system_nolag}) safe w.r.t.~${\S_{\rm D} \cap \S_{\rm TTC}}$.
}
\end{proposition}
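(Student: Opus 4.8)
This proposition is a direct corollary of the extended-CBF results already established, namely Corollaries~\ref{cor:extendedNagumo} and~\ref{cor:extendedCBF}, applied with the identifications $h = h_{\rm D}$, $\he = h_{\rm TTC}$, $\S = \S_{\rm D}$ and $\Se = \S_{\rm TTC}$. The plan is therefore simply to check that every hypothesis of those two results holds for this choice; no new estimate is needed.

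First I would assemble the facts already computed. Since $\L{g}{h_{\rm D}(x)} \equiv 0$, the input does not enter $\dot{h}_{\rm D}$ for any $x$, so $h_{\rm D}$ is not a CBF and the extension~(\ref{eq:CBFextension}) is the appropriate tool. The Observation supplies precisely this extension: $h_{\rm TTC}(x) = \L{f}{h_{\rm D}(x)} + \alpha\big(h_{\rm D}(x)\big)$ with $\alpha(r) = r/\TTC$, and this $\alpha$ is continuously differentiable and belongs to $\Keinf$ (it is linear with positive slope $1/\TTC = \bar{\kappa}$, vanishes at the origin, and is unbounded above and below). Hence $h_{\rm TTC}$ is a bona fide extended CBF built from $h_{\rm D}$, and $\S_{\rm TTC}$ is the associated extended safe set~(\ref{eq:safeset_extended}).

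For the first claim I would invoke Corollary~\ref{cor:extendedNagumo}. Its hypotheses are $\L{g}{h_{\rm D}(x)} \equiv 0$, which is checked, and $\grad{h_{\rm TTC}}(x) \neq 0$ for all $x \in \partial \S_{\rm TTC}$, which holds because $\grad{h_{\rm TTC}}(x) = \begin{bmatrix} \bar{\kappa} & -1 & 1 \end{bmatrix}$ is a fixed nonzero vector. The corollary then states exactly that~(\ref{eq:Nagumo_extended}) with $\he = h_{\rm TTC}$ implies safety of~(\ref{eq:system_nolag}) with respect to $\S_{\rm D} \cap \S_{\rm TTC}$, which is the first assertion. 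For the second claim I would invoke Corollary~\ref{cor:extendedCBF}, which additionally requires $h_{\rm TTC}$ to be a CBF for~(\ref{eq:system_nolag}) on $\S_{\rm TTC}$ with some $\alphae \in \Keinf$. Here the key point is that $\L{g}{h_{\rm TTC}(x)} = -1 \neq 0$ for every $x$, so there is no state at which $\L{g}{h_{\rm TTC}(x)} = 0$, and the CBF inequality~(\ref{eq:CBF_condition_rewritten}) is satisfied vacuously for an arbitrary $\alphae \in \Keinf$; thus $h_{\rm TTC}$ is a CBF. Corollary~\ref{cor:extendedCBF} then gives that any locally Lipschitz controller $k$ satisfying~(\ref{eq:safety_condition_extended}) keeps~(\ref{eq:closedloop}) inside $\S_{\rm D} \cap \S_{\rm TTC}$.

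Since the argument is a chain of citations, there is no genuine obstacle; the only points meriting care are technical. One must confirm that the $\alpha$ produced by the Observation is admissible for the extended-CBF construction (continuously differentiable and in $\Keinf$), that $\grad{h_{\rm TTC}}$ is nowhere zero so that the Nagumo-type boundary condition applies on all of $\partial \S_{\rm TTC}$, and that the vacuous verification of~(\ref{eq:CBF_condition_rewritten}) is legitimate because $\L{g}{h_{\rm TTC}}$ never vanishes. It is also worth recalling, following the discussion preceding Corollary~\ref{cor:extendedNagumo}, that only the piece $\partial \S_{\rm TTC} \cap \S_{\rm D}$ of the boundary of $\S_{\rm D} \cap \S_{\rm TTC}$ needs to be examined, the remaining piece $\partial \S_{\rm D} \cap \S_{\rm TTC}$ being automatically safe.
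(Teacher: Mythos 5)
Your proposal is correct and follows essentially the same route as the paper, which likewise treats the Proposition as an immediate application of the extended-CBF corollaries after noting that $\L{g}{h_{\rm D}(x)} \equiv 0$, that $h_{\rm TTC}$ is the extension of $h_{\rm D}$ with $\alpha(r)=r/\TTC$, that $\grad{h_{\rm TTC}}(x)\neq 0$ on $\partial\S_{\rm TTC}$, and that $h_{\rm TTC}$ is a CBF since $\L{g}{h_{\rm TTC}(x)}=-1$ never vanishes. Your explicit verification of each hypothesis (including the vacuous check of the CBF condition and the remark that only the $\partial\S_{\rm TTC}\cap\S_{\rm D}$ portion of the boundary needs examination) matches the paper's reasoning, merely spelled out in more detail.
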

Thus, safe distance is guaranteed by ensuring safe time to conflict, and safety is ultimately achieved for ${\S_{\rm D} \cap \S_{\rm TTC}}$.
Moreover, note that if the vehicles do not move in reverse, i.e., ${v, \vL \geq 0}$, and if ${\TH = \TTC}$, then
${h_{\rm TH}(x) \geq 0}$ implies ${h_{\rm D}(x) \geq 0}$ and ${h_{\rm TTC}(x) \geq 0}$; cf.~(\ref{eq:safesetD})-(\ref{eq:safesetTTC}).
This means that ensuring safety w.r.t.~time headway yields safety w.r.t.~time to conflict and~distance (provided that ${h_{\rm TH}(x_0) \geq 0}$ holds).

The safety of~(\ref{eq:system_nolag}) w.r.t.~distance and time to conflict with the CCC law in~(\ref{eq:CCC}) can be certified by~(\ref{eq:Nagumo_extended}) that reduces to:
\begin{equation}
    \ks(x) - \kd(x) \!\geq\! 0, \;\, \forall x \!\in\! \X \ {\rm s.t.}\ h_{\rm TTC}(x) \!=\! 0\ {\rm and}\ h_{\rm D}(x) \!\geq\! 0,
\label{eq:Nagumo_extended_special}
\end{equation}
as ${\L{g}{h_{\rm TTC}(x)} < 0}$, cf.~(\ref{eq:safety_condition_Lghneg}).
Here $\ks$ is obtained from~(\ref{eq:ks_extended}):
\begin{equation}
    \ks(x) = \bar{\kappa}(\vL \!-\! v) + p(v) + \alphae \big( \bar{\kappa}(D \!-\! \Dsf) + \vL-v \big) + \aL.
\label{eq:ksTTC}
\end{equation}
After analyzing for which parameters~(\ref{eq:Nagumo_extended_special}) holds, we obtain the following result (with proof in the Appendix).

\begin{theorem} \label{theo:nolag_TTC}
\textit{
System~(\ref{eq:system_nolag}) with ${u \!=\! k(x) \!=\! \kd(x)}$ given by~(\ref{eq:CCC}) and ${A, B, C \geq 0}$ is safe w.r.t.~${\S_{\rm D} \cap \S_{\rm TTC}}$ in~(\ref{eq:safesetD}) and~(\ref{eq:safesetTTC}) if
${\aL \geq -\gamma(\vL)}$ with some ${\gamma \in \K}$,
${v,\vL \in [0,\bar{v}]}$ with some ${\bar{v} \geq 0}$,
${\Dst>\Dsf}$,
${\bar{\kappa} \geq \kappa}$,
${C \leq 1}$ and:
\begin{multline}
    A \kappa (\Dst-\Dsf) + \min \{0,B-\bar{\kappa}\} \bar{v} \\
    + \min_{\vL \in [0,\bar{v}]} \Big[ (\bar{\kappa}-B+A) \vL - (1-C) \gamma(\vL) \Big] \geq 0.
\label{eq:TTC_safety_nolag}
\end{multline}
}
\end{theorem}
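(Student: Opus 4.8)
The plan is to reduce the claim, via Corollary~\ref{cor:extendedNagumo}, to the scalar Nagumo-type inequality~\eqref{eq:Nagumo_extended_special} already isolated in the excerpt, and then to show that hypothesis~\eqref{eq:TTC_safety_nolag} forces that inequality. The premises of Corollary~\ref{cor:extendedNagumo} are in place: $\L{g}{h_{\rm D}(x)} \equiv 0$, $h_{\rm TTC}$ is a continuously differentiable extended CBF~\eqref{eq:CBFextension} of $h_{\rm D}$ with $\grad{h_{\rm TTC}}(x) \neq 0$ on $\partial\S_{\rm TTC}$ and $\L{g}{h_{\rm TTC}(x)} = -1 < 0$, and $\kd$ in~\eqref{eq:CCC} is locally Lipschitz so the closed loop is well posed. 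Hence it suffices to prove $\ks(x) - \kd(x) \geq 0$ for every $x$ with $h_{\rm TTC}(x) = 0$, $h_{\rm D}(x) \geq 0$ and $v, \vL \in [0, \bar{v}]$. On that set the argument of $\alphae$ in~\eqref{eq:ksTTC} vanishes, $\alphae\big(h_{\rm TTC}(x)\big) = \alphae(0) = 0$, so inserting~\eqref{eq:ksTTC} and~\eqref{eq:CCC} the quantity to be signed is
\[
  \Phi(x) := \bar{\kappa}(\vL - v) + p(v) + (1-C)\aL - A\big(V(D) - v\big) - B\big(W(\vL) - v\big).
\]

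Next I would parametrize this boundary. The equation $h_{\rm TTC}(x) = 0$ reads $\bar{\kappa}(D - \Dsf) = v - \vL$, which with $h_{\rm D}(x) \geq 0$ forces $v \geq \vL$ and eliminates $D$ through $D = \Dsf + (v - \vL)/\bar{\kappa}$. Then I would lower-bound $\Phi$ term by term, each step invoking one hypothesis: drop $p(v) \geq 0$; use $C \leq 1$ and $\aL \geq -\gamma(\vL)$ to replace $(1-C)\aL$ by $-(1-C)\gamma(\vL)$; use $W(\vL) \leq \vL$ (from $W = \min\{\vL,\vmax\}$) to get $-B\big(W(\vL) - v\big) \geq B(v - \vL)$; and use $V(D) \leq \kappa(D - \Dst)$ (from $V = \min\{\kappa(D-\Dst),\vmax\}$) together with $\bar{\kappa} \geq \kappa$, $D \geq \Dsf$, and the substitution $v = \vL + \bar{\kappa}(D - \Dsf)$ carried out \emph{after} expanding $v - \kappa(D - \Dst) = \vL + (\bar{\kappa}-\kappa)(D-\Dsf) + \kappa(\Dst-\Dsf)$, to obtain $-A\big(V(D) - v\big) \geq A\vL + A\kappa(\Dst - \Dsf)$. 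Collecting $\bar{\kappa}(\vL - v) + B(v - \vL) = (\bar{\kappa} - B)(\vL - v)$ gives
\[
  \Phi(x) \geq (\bar{\kappa} - B)(\vL - v) + A\vL - (1-C)\gamma(\vL) + A\kappa(\Dst - \Dsf).
\]

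It then remains to bound the linear part. On the boundary $\vL - v \in [\vL - \bar{v},\, 0]$, the upper end from $v \geq \vL$ and the lower end from $v \leq \bar{v}$, so minimizing $\tau \mapsto (\bar{\kappa}-B)\tau$ over $\tau \in [\vL - \bar{v}, 0]$ gives $(\bar{\kappa}-B)(\vL - v) \geq (\bar{\kappa}-B)(\vL - \bar{v})$ when $\bar{\kappa} \geq B$ and $(\bar{\kappa}-B)(\vL-v) \geq 0$ when $\bar{\kappa} < B$; a one-line check shows that in both cases $(\bar{\kappa}-B)(\vL-v) + A\vL \geq (\bar{\kappa}-B+A)\vL + \min\{0, B-\bar{\kappa}\}\bar{v}$. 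Hence
\[
  \Phi(x) \geq (\bar{\kappa}-B+A)\vL - (1-C)\gamma(\vL) + \min\{0, B-\bar{\kappa}\}\bar{v} + A\kappa(\Dst - \Dsf),
\]
and taking the infimum over $\vL \in [0, \bar{v}]$ shows $\Phi(x)$ is bounded below by exactly the left-hand side of~\eqref{eq:TTC_safety_nolag}. Under~\eqref{eq:TTC_safety_nolag} this is $\geq 0$, so~\eqref{eq:Nagumo_extended_special} holds and Corollary~\ref{cor:extendedNagumo} gives safety of~\eqref{eq:system_nolag} w.r.t.~$\S_{\rm D} \cap \S_{\rm TTC}$.

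The hard part will be arranging the algebra so that precisely the stated sufficient condition appears. The two nonsmooth saturations in $V$ and $W$ should cause no trouble once one commits to the \emph{unconditional} bounds $V(D)\le\kappa(D-\Dst)$ and $W(\vL)\le\vL$, so the $\vmax$-saturated branches need no separate case analysis; but the substitution order for the $V$-term is essential---the leftover $(\bar{\kappa}-\kappa)(D-\Dsf) \geq 0$ may only be discarded because $\bar{\kappa} \geq \kappa$, and this is exactly where that hypothesis is used. A minor point is that replacing the triangular feasible set $\{0 \leq \vL \leq v \leq \bar{v}\}$ by the coordinate ranges $\vL \in [0,\bar{v}]$, $\vL - v \in [\vL-\bar{v},0]$ only enlarges the region, so the resulting condition is sufficient, not necessary; everything else is bookkeeping.
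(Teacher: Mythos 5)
Your proposal is correct and follows essentially the same route as the paper's proof: reduce to the boundary inequality~(\ref{eq:Nagumo_extended_special}) via Corollary~\ref{cor:extendedNagumo}, bound $V$, $W$, $p$ by their unsaturated versions, exploit ${h_{\rm TTC}(x)=0}$ and ${h_{\rm D}(x)\geq 0}$ to handle the $\alphae$ and $(\bar{\kappa}-\kappa)(D-\Dsf)$ terms, and bound the remaining linear terms over ${v,\vL \in [0,\bar{v}]}$ to recover exactly the left-hand side of~(\ref{eq:TTC_safety_nolag}). Your explicit substitution ${v = \vL + \bar{\kappa}(D-\Dsf)}$ is algebraically identical to the paper's device of adding ${A\, h_{\rm TTC}(x) = 0}$, so the two arguments coincide.
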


Notice that the condition ${\dot{v}_{\rm L} = \aL \geq -\gamma(\vL)}$ describes the lead CHV's motion and, according to CBF theory, it guarantees ${\vL(0) \geq 0 \implies \vL(t) \geq 0}$, ${\forall t \geq 0}$.
That is, this condition describes that the lead vehicle neither brakes too hard nor drives in reverse.
How ``hard'' it brakes is characterized by the function $\gamma$.
For example, for the constant-jerk profile of the lead CHV in Fig.~\ref{fig:simulation}, it can be shown that ${\aL(t) \geq -\sqrt{20 \vL(t)}}$ for all time, i.e., ${\gamma(r) = \sqrt{20 r}}$.

Similar to~(\ref{eq:TH_safety_nolag}), condition~(\ref{eq:TTC_safety_nolag}) can be visualized in the $(B,A)$ space as safety chart; see Fig.~\ref{fig:safety}(b) for ${\gamma(r) = \sqrt{20 r}}$ and ${\bar{\kappa} = \kappa}$.
The same parameters are used as in Fig.~\ref{fig:stability}: ${C=0}$ in panel (a), and ${C = 0, 0.25, 0.75}$ (thin lines), ${C=0.5}$ (thick lines and shading) and ${C \to 1}$ (dashed lines) in panel (b).
The safe region was found by brute-force evaluation of~(\ref{eq:TTC_safety_nolag}) on a grid of $\vL$, $A$ and $B$ (although expressing $A$ from~(\ref{eq:TTC_safety_nolag}) explicitly could be possible depending on the form of $\gamma$).
The safe region has V-shape, similar to Fig.~\ref{fig:safety}(a), and it moves towards smaller gain $A$ as the acceleration gain $C$ is increased.
This shows that acceleration $\aL$ feedback---that is typically obtained by V2V connectivity---is helpful in achieving safety w.r.t.~distance and time to conflict, since safety would otherwise require large gains and control inputs.

Moreover, apart from the safety charts, the safety filter~(\ref{eq:QPsolumin}) provides an alternative way of safety-critical control regardless of the parameters of CCC~(\ref{eq:CCC}).
Choosing between safety chart and safety filter is up to the  user---the end result is CCC with formal safety guarantees in both cases.

\section*{CONCLUSION}

In this paper, we investigated the safety of connected automated vehicles (CAVs) executing connected cruise control (CCC) by means of control barrier function (CBF) theory.
We established safety charts for existing CCC designs to identify provably safe choices of controller parameters, analogously to stability charts found in the literature.
To recover formal safety guarantees for unsafe parameter choices, we also proposed CBF-based safety filters for controller synthesis.
As future research, we plan to investigate safe CCC in connected vehicle networks where CAVs respond to multiple vehicles.


\section*{APPENDIX}

\begin{proof}[Proof of Theorem~\ref{theo:nolag_TH}]
To prove safety, we apply Theorem~\ref{theo:Nagumo} by showing that~(\ref{eq:Nagumo_special}) holds. We express ${\ks(x)-\kd(x)}$:
\begin{multline}
    \ks(x) - \kd(x) = \alpha(\bar{\kappa}(D-\Dsf) - v) + \bar{\kappa} (\vL-v) + p(v) \\
    - A (V(D) - v) - B (W(\vL)-v),
\end{multline}
and use ${V(D) \leq \kappa(D-\Dst)}$, ${W(\vL) \leq \vL}$, and ${p(v) \geq 0}$:
\begin{multline}
    \ks(x) - \kd(x) \geq \alpha(\bar{\kappa}(D-\Dsf) - v) + \bar{\kappa} (\vL-v) \\
    - A \big( \kappa (D-\Dst) - v \big) - B (\vL-v).
\label{eq:kskd_TH_1}
\end{multline}
This means that providing safety without considering the saturations at $\vmax$ in $V$, $W$ and the resistance term $p(v)$ implies safety with those terms too.
We substitute ${h_{\rm TH}(x) = 0}$ into~(\ref{eq:kskd_TH_1}), which makes the term of $\alpha$ zero; cf.~\eqref{eq:safesetTH}.
Then we add ${A h_{\rm TH}(x) = 0}$ to both sides, and reorganize to: 
\begin{multline}
    \ks(x) - \kd(x) \geq A (\bar{\kappa}-\kappa) (D-\Dsf) + A \kappa (\Dst-\Dsf) \\
    + (\bar{\kappa}-B)(\vL-v).
\label{eq:kskd_TH_2}
\end{multline}
If ${v \geq 0}$, then ${D \!\geq\! \Dsf}$ when ${h_{\rm TH}(x)=0}$. Thus, if ${\Dst \!\geq\! \Dsf}$ and ${B = \bar{\kappa} \geq \kappa}$ also hold,~(\ref{eq:Nagumo_special}) follows and safety is proven.
Furthermore, if ${v,\vL \in [0,\bar{v}]}$, we have ${|\vL-v| \leq \bar{v}}$ and ${D \geq \Dsf}$ when ${h_{\rm TH}(x)=0}$.
With ${\bar{\kappa} \geq \kappa}$,~(\ref{eq:kskd_TH_2}) leads to:
\begin{equation}
    \ks(x) - \kd(x) \geq A \kappa (\Dst-\Dsf) - |\bar{\kappa}-B| \bar{v}.
\end{equation}
Thus,~(\ref{eq:Nagumo_special}) and safety follows for ${\Dst>\Dsf}$ and~(\ref{eq:TH_safety_nolag}).
\end{proof}

\begin{proof}[Proof of Theorem~\ref{theo:nolag_TTC}]
We prove safety by applying Corollary~\ref{cor:extendedNagumo} and showing that~(\ref{eq:Nagumo_extended_special}) holds, where:
\begin{multline}
    \ks(x) - \kd(x) = \alphae \big( \bar{\kappa}(D-\Dsf) + \vL-v \big) + \bar{\kappa} (\vL-v) + \aL \\
    + p(v) - A \big( V(D) - v \big) - B \big( W(\vL) - v \big) - C \aL.
\end{multline}
We use ${V(D) \leq \kappa(D-\Dst)}$, ${W(\vL) \leq \vL}$, and ${p(v) \geq 0}$, substitute ${h_{\rm TTC}(x) = 0}$, which makes the term of $\alphae$ zero; cf.~\eqref{eq:safesetTTC}.
Then we add ${A h_{\rm TTC}(x) = 0}$ to both sides:
\begin{multline}
    \ks(x) - \kd(x) \geq A (\bar{\kappa}-\kappa) (D-\Dsf) + A \kappa (\Dst-\Dsf) \\
    + A \vL + (1-C) \aL + (\bar{\kappa}-B)(\vL-v).
\end{multline}
With ${\aL \!\geq\! -\gamma(\vL)}$, ${C \leq 1}$, ${\kappa \leq \bar{\kappa}}$ and ${h_{\rm D}(x) \!=\! D \!-\! \Dsf \geq 0}$:
\begin{multline}
    \ks(x) - \kd(x) \geq A \kappa (\Dst-\Dsf) + (B-\bar{\kappa}) v \\
    + (\bar{\kappa}-B+A) \vL - (1-C) \gamma(\vL).
\end{multline}
For ${v,\vL \in [0,\bar{v}]}$, we have ${(B-\bar{\kappa}) v \geq \min\{0,B-\bar{\kappa}\} \bar{v}}$, whereas ${\Dst>\Dsf}$ and~(\ref{eq:TTC_safety_nolag}) yield~(\ref{eq:Nagumo_extended_special}) and imply safety.
\end{proof}



\bibliographystyle{IEEEtran}
\bibliography{2023_cdc}	

\begin{thebibliography}{10}
\providecommand{\url}[1]{#1}
\csname url@rmstyle\endcsname
\providecommand{\newblock}{\relax}
\providecommand{\bibinfo}[2]{#2}
\providecommand\BIBentrySTDinterwordspacing{\spaceskip=0pt\relax}
\providecommand\BIBentryALTinterwordstretchfactor{4}
\providecommand\BIBentryALTinterwordspacing{\spaceskip=\fontdimen2\font plus
\BIBentryALTinterwordstretchfactor\fontdimen3\font minus
  \fontdimen4\font\relax}
\providecommand\BIBforeignlanguage[2]{{%
\expandafter\ifx\csname l@#1\endcsname\relax
\typeout{** WARNING: IEEEtran.bst: No hyphenation pattern has been}%
\typeout{** loaded for the language `#1'. Using the pattern for}%
\typeout{** the default language instead.}%
\else
\language=\csname l@#1\endcsname
\fi
#2}}

\bibitem{wang2018review_CACC}
Z.~Wang, G.~Wu, and M.~J. Barth, ``A review on cooperative adaptive cruise
  control ({CACC}) systems: Architectures, controls, and applications,'' in
  \emph{21st IEEE International Conference on Intelligent Transportation
  Systems}, 2018, pp. 2884--2891.

\bibitem{Ge2014}
J.~I. Ge and G.~Orosz, ``Dynamics of connected vehicle systems with delayed
  acceleration feedback,'' \emph{Transportation Research Part C: Emerging
  Technologies}, vol.~46, pp. 46--64, 2014.

\bibitem{vahidi2018energy}
A.~Vahidi and A.~Sciarretta, ``Energy saving potentials of connected and
  automated vehicles,'' \emph{Transportation Research Part C: Emerging
  Technologies}, vol.~95, pp. 822--843, 2018.

\bibitem{cui2017stabilizing}
S.~Cui, B.~Seibold, R.~Stern, and D.~B. Work, ``Stabilizing traffic flow via a
  single autonomous vehicle: Possibilities and limitations,'' in \emph{IEEE
  Intelligent Vehicles Symposium}, 2017, pp. 1336--1341.

\bibitem{zheng2020smoothing}
Y.~Zheng, J.~Wang, and K.~Li, ``Smoothing traffic flow via control of
  autonomous vehicles,'' \emph{IEEE Internet of Things Journal}, vol.~7, no.~5,
  pp. 3882--3896, 2020.

\bibitem{Ge2018}
J.~I. Ge, S.~S. Avedisov, C.~R. He, W.~B. Qin, M.~Sadeghpour, and G.~Orosz,
  ``Experimental validation of connected automated vehicle design among
  human-driven vehicles,'' \emph{Transportation Research Part C: Emerging
  Technologies}, vol.~91, pp. 335--352, 2018.

\bibitem{stern2018dissipation}
R.~E. Stern, S.~Cui, M.~L. Delle~Monache, R.~Bhadani, M.~Bunting, M.~Churchill,
  N.~Hamilton, H.~Pohlmann, F.~Wu, B.~Piccoli, B.~Seibold, J.~Sprinkle, and
  D.~B. Work, ``Dissipation of stop-and-go waves via control of autonomous
  vehicles: Field experiments,'' \emph{Transportation Research Part C: Emerging
  Technologies}, vol.~89, pp. 205--221, 2018.

\bibitem{alam2014guaranteeing}
A.~Alam, A.~Gattami, K.~H. Johansson, and C.~J. Tomlin, ``Guaranteeing safety
  for heavy duty vehicle platooning: Safe set computations and experimental
  evaluations,'' \emph{Control Engineering Practice}, vol.~24, pp. 33--41,
  2014.

\bibitem{Niletal2016}
P.~Nilsson, O.~Hussien, A.~Balkan, Y.~Chen, A.~D. Ames, J.~W. Grizzle, N.~Ozay,
  H.~Peng, and P.~Tabuada, ``Correct-by-construction adaptive cruise control:
  {T}wo approaches,'' \emph{IEEE Transactions on Control Systems Technology},
  vol.~24, no.~4, pp. 1294--1307, 2016.

\bibitem{ames2014control}
A.~Ames, J.~Grizzle, and P.~Tabuada, ``Control barrier function based quadratic
  programs with application to adaptive cruise control,'' in \emph{53rd IEEE
  Conference on Decision and Control}, 2014, pp. 6271--6278.

\bibitem{AmesXuGriTab2017}
A.~D. Ames, X.~Xu, J.~W. Grizzle, and P.~Tabuada, ``Control barrier function
  based quadratic programs for safety critical systems,'' \emph{IEEE
  Transactions on Automatic Control}, vol.~62, no.~8, pp. 3861--3876, 2017.

\bibitem{massera2017safe}
C.~Massera~Filho, M.~H. Terra, and D.~F. Wolf, ``Safe optimization of highway
  traffic with robust model predictive control-based cooperative adaptive
  cruise control,'' \emph{IEEE Transactions on Intelligent Transportation
  Systems}, vol.~18, no.~11, pp. 3193--3203, 2017.

\bibitem{gunter2022experimental}
G.~Gunter, M.~Nice, M.~Bunting, J.~Sprinkle, and D.~B. Work, ``Experimental
  testing of a control barrier function on an automated vehicle in live
  multi-lane traffic,'' in \emph{Workshop on Data-Driven and Intelligent
  Cyber-Physical Systems for Smart Cities}, 2022, pp. 31--35.

\bibitem{chen2018obstacle}
Y.~Chen, H.~Peng, and J.~Grizzle, ``Obstacle avoidance for low-speed autonomous
  vehicles with barrier function,'' \emph{IEEE Transactions on Control Systems
  Technology}, vol.~26, no.~1, pp. 194--206, 2018.

\bibitem{jankovic2022multiagent}
M.~Jankovic, M.~Santillo, and Y.~Wang, ``Multi-agent systems with {CBF}-based
  controllers -- collision avoidance and liveness from instability,''
  \emph{arXiv preprint, arXiv:2207.04915}, 2022.

\bibitem{xiao2019decentralized}
W.~Xiao, C.~Belta, and C.~G. Cassandras, ``Decentralized merging control in
  traffic networks: A control barrier function approach,'' in \emph{ACM/IEEE
  International Conference on Cyber-Physical Systems}, 2019, pp. 270--279.

\bibitem{abduljabbar2021cbfbased}
M.~Abduljabbar, N.~Meskin, and C.~G. Cassandras, ``Control barrier
  function-based lateral control of autonomous vehicle for roundabout
  crossing,'' in \emph{IEEE International Conference on Intelligent
  Transportation Systems}, 2021, pp. 859--864.

\bibitem{zhao2023safetycritical}
C.~Zhao, H.~Yu, and T.~G. Molnar, ``Safety-critical traffic control by
  connected automated vehicles,'' \emph{Transportation Research Part C:
  Emerging Technologies}, vol. 154, p. {104230}, 2023.

\bibitem{He2018}
C.~R. He and G.~Orosz, ``Safety guaranteed connected cruise control,'' in
  \emph{21st International Conference on Intelligent Transportation Systems},
  2018, pp. 549--554.

\bibitem{Alan2022AV}
A.~Alan, A.~J. Taylor, C.~R. He, A.~D. Ames, and G.~Orosz, ``Control barrier
  functions and input-to-state safety with application to automated vehicles,''
  \emph{IEEE Transactions on Control Systems Technology}, pp. 1--16, doi:
  10.1109/TCST.2023.3\,286\,090, 2023.

\bibitem{nagumo1942lage}
M.~Nagumo, ``{\"U}ber die lage der integralkurven gew{\"o}hnlicher
  differentialgleichungen,'' \emph{Proceedings of the Physico-Mathematical
  Society of Japan. 3rd Series}, vol.~24, pp. 551--559, 1942.

\bibitem{Nguyen2016}
Q.~Nguyen and K.~Sreenath, ``{Exponential Control Barrier Functions} for
  enforcing high relative-degree safety-critical constraints,'' in
  \emph{American Control Conference}, 2016, pp. 322--328.

\bibitem{xiao2019cbf}
W.~Xiao and C.~Belta, ``Control barrier functions for systems with high
  relative degree,'' in \emph{58th Conference on Decision and Control}, 2019,
  pp. 474--479.

\bibitem{ames2020integral}
A.~D. Ames, G.~Notomista, Y.~Wardi, and M.~Egerstedt, ``Integral control
  barrier functions for dynamically defined control laws,'' \emph{IEEE Control
  Systems Letters}, vol.~5, no.~3, pp. 887--892, 2020.

\end{thebibliography}

\end{document}